\newcommand\ip[2]{\langle #1, #2\rangle}
\newcommand\absip[2]{|\langle #1, #2\rangle|}
\newcommand\ceil[1]{\left\lceil #1 \right\rceil}
\newcommand\smallS{s}
\newcommand\dico{\Phi}
\newcommand\atom{\varphi}
\newcommand\Sset{{\mathbb{S}}}
\newcommand\Bc{{B^c}}
\newcommand\noise{\eta}
\newcommand\nsig{\rho}
\newcommand\argmax{\operatorname{argmax}}
\newcommand\argmin{\operatorname{argmin}}
\newcommand\SNR{\operatorname{SNR}}
\newcommand{\I}{{\mathbb{I}}}
\newcommand{\R}{{\mathbb{R}}}
\renewcommand{\P}{{\mathbb{P}}}
\newcommand{\E}{{\mathbb{E}}}
\title{Average performance of Orthogonal Matching Pursuit (OMP) for sparse approximation}
\author{\name Karin Schnass \email karin.schnass@uibk.ac.at\\
\addr  Department of Mathematics\\
University of Innsbruck\\
 Technikerstra\ss e 13\\
  6020 Innsbruck, Austria}
\begin{document}

\maketitle 

\begin{abstract}
We present a theoretical analysis of the average performance of OMP for sparse approximation. 
For signals that are generated from a dictionary with $K$ atoms and coherence $\mu$ and coefficients corresponding
to a geometric sequence with parameter~$\alpha<1$, we show that OMP is successful with high probability as long as the
sparsity level $S$ scales as $S\mu^2 \log K \lesssim 1-\alpha $. This improves by an order of magnitude over worst case results and shows that OMP and its famous competitor Basis Pursuit outperform each other depending on the setting.
\end{abstract}

\begin{keywords} sparse approximation; Orthogonal Matching Pursuit; average case analysis; decaying coefficients;
\end{keywords}

\section{Introduction}
In sparse approximation the goal is to approximate a given signal $y\in \R^d$ by a linear combination 
of a small number $S\ll d$ of elements $\atom_i\in \R^d$, called atoms, out of a given larger set, such as basis or a frame,
called the dictionary. Storing the normalised atoms as columns in the dictionary matrix $\dico =(\atom_1 \ldots, \atom_K)$,
and denoting the restriction to the columns indexed by a set $I$ by $\dico_I$,
we can write informally,
\begin{align}
\mbox{find} \quad y \approx {\sum}_{k \in I} \atom_k x_k  = \dico_I x_I \quad \mbox{s.t.} \quad |I| = S\ll d
\end{align}
Finding the smallest error for a given sparsity level $S$ and the corresponding support set $I$, which determines $x_I$ via $x_I = \dico_I^\dagger y$, where $ \dico_I^\dagger$ is the Moore-Penrose pseudo inverse, becomes an NP-hard problem in general unless the dictionary is an orthonormal system. In this case thresholding, meaning choosing as $I$ the indices
of the atoms having the $S$-largest inner products with the signal in magnitude, will succeed. For all other cases, one had to find algorithms which are more efficient, if less optimal than an exhaustive search through all possible supports sets $I$ with subsequent projection $P(\dico_I)y:=\dico_I \dico_I^\dagger y$. The two most investigated directions are greedy methods and convex relaxation techniques - the two golden classics being Orthogonal Matching Pursuit (OMP), \cite{parekr93}, and Basis Pursuit (BP), \cite{donoho:bp}, respectively.\\
OMP finds the support iteratively, adding the index of the atom which has the largest absolute inner product with the residual and updating the residual. So initialising $r_0 = y$, $J_0= \emptyset$, it
\begin{align*}
&\mbox{finds} \quad j = \argmax_k \absip{r_i}{\atom_k}\quad \mbox{and}\\
&\mbox{updates}\quad J_{i+1} = J_i \cup \{j\} \quad \mbox{resp.} \quad r_{i+1} = y - P(\dico_{J_{i+1}})y,
\end{align*}
until a stopping criterion is met, such as reaching the desired number of iterations or the size of the residual/largest inner product being sufficiently small.\\
The Basis Pursuit principle, on the other hand, prescribes finding the minimiser of the convex programme
\begin{align}
\hat x = \argmin_{x: y = \dico x} \|x\|_1,
\end{align} 
and choosing $I$ as the index set of the $S$-largest entries of $\hat x$ in magnitude.\\
The interesting question concerning both schemes is when they are successful. So
assuming that the signal $y$ is known to be S-sparse, meaning $y = \dico_I x_I$ with $|I|=S$, when can they recover the support $I$.
It was first studied in \cite{Tropp:greed, bp:fuchs} and for dictionaries with coherence $\mu:=\max_{j\neq k}\absip{\atom_j}{\atom_k}$
a sufficient condition for both schemes to succeed is that $2S\mu < 1$, which is relaxed in comparison to the sufficient condition for thresholding $2S\mu < \frac{\min_{k\in I} |x_k|}{\max_{k\in I} |x_k|}$, but still quite restrictive, especially considering the much better performance in practice. This led to the investigation of the average performance when modelling the signals as 
\begin{align}\label{noiseless_model}
y =  {\sum}_k \sigma_k c_k \atom_{p(k)},
\end{align}
where $(\sigma_k)_k$ is a Rademacher sequence, the coefficient sequence $c$ is non-increasing, $c_k\geq c_{k+1} \geq 0$, and $c_k = 0$ for $k>S$ and $p$ is some permutation such that the support $I = \{p(1),\ldots ,p(S)\})$ satisfies $\delta_I := \| \dico_I^\star \dico_I - \I_d \|_{2,2} \leq \frac{1}{2}$, where for a matrix $A$ the transpose is denoted by $A^\star$. \\
It was shown that BP recovers the true support except with probability $2K^{1-2m}$ as long as $16 \mu^2 S \cdot m\log K \leq 1$,\cite{tr08}\footnote{The theorem actually considers Steinhaus instead of Rademacher sequences. However, the proof for Rademacher sequences is exactly the same; simply use Hoeffding's inequality instead of the complex Bernstein inequality. Also beware the buggy support condition in model M1.}, and that thresholding succeeds except with probability $2K^{1-2m}$ as long as $32 \mu^2 S \cdot m\log K \leq \frac{\min_{k\in I} |x_k|}{\max_{k\in I} |x_k|}$, \cite{scva07}\footnote{The improved constant presented here is due to the fact that also for Rademacher sequences $c_0=\frac{1}{2}$.}. 
The fact that for OMP a similar result could only be found in a multi-signal scenario, \cite{grrascva08}, started to give OMP the reputation of being weaker than BP. \\
This was further increased by the advent of Compressed Sensing (CS), \cite{do06cs}, which can be seen as sparse approximation with design freedom for the dictionary. While for BP-type schemes in combination with randomly chosen dictionaries strong results appeared very early, \cite{cata06, carota06}, comparable results for OMP and its variants took longer to develop and are weaker in general, \cite{gitr07,romp}. Still, thanks to its computational advantages and flexibility, e.g. concerning the stopping criteria, OMP remained popular in signal processing - the only difference being that users had a defensive statement a la 'of course BP will perform even better' ready at all times. \\
{\bf Contribution:} Here we will provide the long missing analysis of the average performance of OMP and show that on average neither BP nor OMP are stronger, but confirm folklore wisdom, that OMP works better for signals with decaying coefficients while BP is better for equally sized coefficients. The idea that the performance of OMP improves for decaying coefficients has already been used in \cite{hedrso16} and the simplified result states that if the sorted absolute coefficients form a geometric sequence with decay $\alpha\leq \frac{1}{2}$, then OMP is guaranteed to succeed for all sparsity levels $S$ with $S\mu <1$. Replacing certainty with high probability we will relax this bound by an order of magnitude to $S \mu^2 \sqrt{\log K} \lesssim 1 - \alpha $, for $\alpha < 1$. \\ 
{\bf Organisation:} We will first address full support recovery in the noiseless case and then extend this result to partial support recovery in the noisy case. In Section~\ref{sec_sim} we will conduct two experiments showing that our theoretical results accurately predict the average performance, before finally discussing our results and future work.

%

\section{Noiseless Case} \label{sec_noiseless}
We start with the simple case of signals following the model in \eqref{noiseless_model}. 
Note that from \cite{tr08} we know that for a randomly chosen subset $I$ (permutation $p$) the condition
$\delta_I \leq \frac{1}{2}$ is satisfied with high probability as long as $\mu^2 S \log K \lesssim 1$.

\begin{theorem} Assume that the signals follow the model in \eqref{noiseless_model} and that
for $i \leq S$ the coefficients satisfy $c_{i+t}/c_i \leq 1-\frac{\lambda}{S}$ for $t,\lambda > 0$. 
Then, except with probability $2SK^{1-2m}$, OMP will recover the full support as long as 
\begin{align}\label{noiseless_cond_final}
\left( t\ceil{\frac{S}{\lambda}} + \sqrt{\frac{mtS\log K}{\lambda}} + 1 \right) S \mu^2 \leq \frac{1}{13}.
\end{align}
\end{theorem}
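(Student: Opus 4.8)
The plan is to fix the support $I$ and the coefficient magnitudes and to show that, over the random signs, OMP adds a new element of $I$ at each of its first $S$ iterations, never selecting an atom outside $I$. Writing $J_i\subseteq I$ for the (correct) selection after $i$ steps, $R=I\setminus J_i$ for the surviving indices and $r_i=(\I_d-P(\dico_{J_i}))y$ for the residual, it suffices to guarantee at every step the greedy inequality $\max_{k\in R}\absip{r_i}{\atom_k}>\max_{l\notin I}\absip{r_i}{\atom_l}$. A union bound over the $S$ steps and over the at most $K$ competing atoms outside $I$, each failing with probability $K^{-2m}$ through a two-sided tail estimate, is what will produce the claimed probability $2SK^{1-2m}$.

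Because $y=\dico_I x_I$ and $J_i\subseteq I$, the residual satisfies $r_i=(\I_d-P(\dico_{J_i}))\dico_R x_R$, so only the surviving coefficients $x_R$ feed into it. For any atom $\atom_m$ I would expand $\ip{r_i}{\atom_m}$ into a diagonal contribution $x_m\ip{(\I_d-P(\dico_{J_i}))\atom_m}{\atom_m}$, present only when $m\in R$, a first-order coherence sum $\sum_{k\in R}x_k\ip{\atom_k}{\atom_m}$, and a projection correction $\sum_{k\in R}x_k\ip{P(\dico_{J_i})\atom_k}{\atom_m}$. The hypothesis $\delta_I\le\tfrac12$ bounds $\|\dico_{J_i}^\dagger\|$ and forces each projection-correction coefficient to be of order $S\mu^2$ while keeping $\ip{(\I_d-P(\dico_{J_i}))\atom_m}{\atom_m}\ge 1-O(S\mu^2)$; this near-orthogonality is the source of the isolated term $S\mu^2$ and of the common factor $S\mu^2$ multiplying \eqref{noiseless_cond_final}.

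The decay hypothesis enters through the surviving coefficients. Grouping $R$ into blocks of length $t$ and iterating $c_{i+t}\le(1-\lambda/S)c_i$ geometrically yields, for any surviving set, $\|x_R\|_1\lesssim t\ceil{S/\lambda}\max_{k\in R}|x_k|$ and $\|x_R\|_2\lesssim\sqrt{t\ceil{S/\lambda}}\,\max_{k\in R}|x_k|$; crucially both tail norms are controlled by the single largest survivor, independently of which atoms have already been chosen. Estimating the projection correction deterministically through its $\ell_1$ norm then contributes the term $t\ceil{S/\lambda}\,S\mu^2$, whereas applying Hoeffding to the first-order sum, whose coefficient vector has $\ell_2$ norm at most $\mu\|x_R\|_2$, and calibrating the tail to $K^{-2m}$ contributes the probabilistic term $\sqrt{mtS\log K/\lambda}\,S\mu^2$. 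Dividing the resulting lower bound on $\max_{k\in R}\absip{r_i}{\atom_k}$, which is of order $\max_{k\in R}|x_k|$, by that same quantity turns every estimate dimensionless and reduces the per-step success to \eqref{noiseless_cond_final}; keeping the numerical losses under control is what pins down the constant $1/13$.

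The genuine difficulty is the statistical coupling introduced by running the algorithm: the residual $r_i$, the chosen set $J_i$ and the survivor set $R$ are all functions of the Rademacher signs, so Hoeffding cannot be applied to $\ip{r_i}{\atom_l}$ directly, since the coefficients of that Rademacher sum themselves depend on the signs through the path taken. The plan to break this coupling is to split each residual inner product into a genuinely sign-independent Rademacher functional of the original signal — to which Hoeffding applies verbatim and which yields the probabilistic term — and a path-dependent remainder coming from the deflation, which I would bound deterministically using only $\delta_I\le\tfrac12$ and the $\ell_1$ tail estimate, so that a single high-probability event controls all $S$ iterations at once. I expect the crux to be exactly this decoupling: showing that the deflation remainder is truly of the lower, $S\mu^2$-order uniformly over every admissible history — not the block-decay bookkeeping nor the concentration inequality, which become routine once the coefficients of the relevant sums are known to be sign-independent.
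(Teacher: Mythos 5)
Your proposal correctly isolates the crux --- the chosen set $J_i$ and the survivor set $R=I\setminus J_i$ depend on the Rademacher signs, so Hoeffding cannot be applied to $\sum_{k\in R}x_k\ip{\atom_k}{\atom_m}$ --- but the decoupling you propose does not resolve it. The only sign-independent Rademacher functional of the original signal available in your setup is $\ip{\atom_m}{y}=\sum_{k\in I}x_k\ip{\atom_k}{\atom_m}$, and the path-dependent remainder is then $\ip{\atom_m}{P(\dico_{J_i})y}$, which contains the coherence leakage $\sum_{k\in J_i}x_k\ip{\atom_k}{\atom_m}$ of the \emph{already selected} coefficients. Its best deterministic bound is of order $\mu\|x_{J_i}\|_1\lesssim \mu\, t\ceil{S/\lambda}\, c_1$, i.e.\ at the scale of the largest coefficient $c_1$, whereas every estimate must be measured against the largest surviving coefficient $c_{\min R}$. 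Under your own decay hypothesis the ratio $c_1/c_{\min R}$ grows like $(1-\lambda/S)^{-(i-1)/t}$, and in precisely the regime where the theorem improves on the worst case ($\lambda\gg t$, e.g.\ geometric decay with $t=1$, $\lambda=S(1-\alpha)$) this reaches $e^{\lambda/t}\gg 1$, so the remainder swamps the diagonal term after the first iterations. Your stated hope that "the deflation remainder is truly of the lower, $S\mu^2$-order uniformly" is therefore false in your formulation: under the weak invariant $J_i\subseteq I$ there is no split of $\ip{\atom_m}{r_i}$ into a fixed Rademacher sum plus a deterministic remainder of the right order, and one is thrown back to a union bound over $2^S$ survivor sets --- exactly the obstruction that destroys the average-case gain.

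The paper escapes this not by a cleverer split of the same quantity but by strengthening the induction hypothesis. It tracks only \emph{admissible} supports $J=A_i\cup B$, where $A_i=\{1,\dots,i-1\}$ is the fixed prefix of largest coefficients and $B$ lies in a window $M_i$ of length $T-1$ with $T=t\ceil{S/\lambda}$, and it proves at each step the stronger greedy inequality $\absip{\atom_i}{r_J}>\max_{k\in Z_i}\absip{\atom_k}{r_J}$, where $Z_i=\{i+T,\dots,K\}\cup\Sset^c$ is a \emph{fixed} set given $i$; note this forbids picking in-support atoms from deep in the tail, which your per-step condition $\max_{k\in R}\absip{r_i}{\atom_k}>\max_{l\notin I}\absip{r_i}{\atom_l}$ would allow. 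With this structure the selected coefficients are annihilated exactly by the deflation $\I_d-P(\dico_J)$ and never need bounding; the window contributes a deterministic term $\|x_\Bc\|_1\left(\mu+2S\mu^2\right)\le c_i\,t\ceil{S/\lambda}\left(\mu+2S\mu^2\right)$ at the correct scale $c_i$; and the only quantities requiring concentration are $\ip{\atom_k}{\dico_{Z_i^k}x_{Z_i^k}}$, Rademacher sums over the fixed sets $Z_i^k=Z_i\setminus\{k\}$ (the path-dependent projection term $\ip{\atom_k}{P(\dico_J)\dico_{Z_i}x_{Z_i}}$ is reduced to these same quantities via $\|(\dico_J^\star\dico_J)^{-1}\|_{2,2}\le(1-\delta_J)^{-1}$). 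That leaves only $O(SK)$ probabilistic events, which is what yields $2SK^{1-2m}$. A smaller point you also skip: the resulting bound carries the factor $\mu+2S\mu^2$, so to reach the stated condition involving only $S\mu^2$ the paper distinguishes the cases $2S\mu\le 1$, where the worst-case deterministic result already applies, and $2S\mu\ge 1$, where $\mu+2S\mu^2\le 4S\mu^2$; without this case split your reduction to \eqref{noiseless_cond_final} does not close.
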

%
Before presenting the proof, we want to provide some background information on
the ideas used for proving success of OMP and the difficulties associated with an average
case analysis.
A necessary and sufficient condition for a step of OMP to succeed is that for the current (correct) sub-support $J\subset I$
we have 
\begin{align}
\max_{i \in I} \absip{\atom_i}{r_J} > \max_{k \notin I} \absip{\atom_k}{r_J}.
\end{align}
Thus a sufficient condition for OMP to fully recover the support is, that for all possible sub-supports $J$ 
the missing atom which has the largest coefficient $\atom_{i_J}$ satisfies
\begin{align}
\absip{\atom_{i_J}}{r_J} > \max_{k \notin I} \absip{\atom_k}{r_J}.
\end{align}
If the coefficients have random signs then for all $k, J$ the inner products
should concentrate around their expectation, 
\begin{align}
\absip{\atom_i}{r_J}^2 \rightsquigarrow \sum_{k\in I/J} \absip{\atom_i}{[\I_d - P(\dico_J)] \atom_k}^2 c^2_k  \rightsquigarrow c^2_i \pm  \mu^2 \|c_{I/J}\|^2_2,
\end{align}
so a condition of the form $S\mu^2 \lesssim 1$ should ensure success with high probability.
The problem is that there are $2^S$ sub-supports $J$ for which we need to have this concentration.
So taking a union bound for the probability of not enough concentration over all sub-supports, we get back the worst case condition but with a non-zero failure probability.
The immediate conclusion is that in order to get a useful average case result, we have to reduce the number of intermediate
supports that we need to control. For equally sized coefficients this is impossible, since the random signs determine
the order of the absolute inner products. 
However, if the coefficients exhibit some decay, there is a natural order and it is more likely that atoms with large coefficients are picked first. For instance, with sufficient
decay it might happen that the atom with the second largest coefficient is picked before that with the largest, but very unlikely that the atom with 
the smallest coefficient is picked first. The idea of the proof is that OMP will only pick 'sensible' sub-supports, so we only need to ensure
concentration for a much smaller number of them. The amount of concentration needed can then be further reduced by pooling 'sensible' supports of the same type and combining probabilistic and deterministic bounds. \\
%
\begin{proof} We use the following short hands $Q(\dico_J) = \I_d - P(\dico_J)$ as well as $r_J = Q(\dico_J)  y $
for the residual based on an index set $J$ and $x$ for the signed coefficients, $x_k: = c_k\sigma_k$. In order to better understand the various bounds of terms involving $\dico_J$, we recommend a quick familiarisation with Lemma~6.2, \cite{grrascva08}. Further we will assume w.l.o.g., that is, by reordering the dictionary matrix, that $I=\{1,\ldots, S\}=:\Sset$.
We now define the following disjoint sets for $i\leq S$ and a parameter $T > 0$, which we set to the optimal value later,
\begin{align}
 A_i &= \{1\ldots i-1\} \quad \mbox{with} \quad A_1 = \emptyset,\\
 M_i & = \{i+1, \ldots , i+T-1\}\cap \Sset,\\
 Z_i &= \{i+T, \dots ,K\} \cup \Sset^c.
\end{align}
We call a sub-support $J\subseteq \Sset$ admissible if there exist 
$i\leq S$, corresponding to the index of the missing atom with largest coefficient in magnitude, and $B\subseteq M_i$ such that $J = A_i \cup B$. We write $\Bc: = M_i/ B$. Note that an admissible sub-support $J = A_i \cup B$ contains the indices of the atoms with the first $i-1$ largest coefficients in magnitude, does not contain the index $i$, may contain indices in the support corresponding to atoms with coefficients large enough that they are likely to be picked before $i$ and is not allowed to contain any indices corresponding to atoms with too small coefficients or outside the support. \\
A sufficient condition for OMP to succeed is that it only picks admissible sub-supports.
Assuming $J$ is admissible, OMP picks another admissible support if (suff. cond.)
\begin{align} \label{suff_cond_omp}
\absip{\atom_i}{r_J} > \max_{k \in Z_i} \absip{\atom_k}{r_J},
\end{align}
which ensures that the either $i$ or some $k \in \Bc$ is chosen.
Since $J$ is admissible the residual has the form
\begin{align}\label{eq_Jres}
r_J = Q(\dico_J) y = Q(\dico_J) (x_i \atom_i + \dico_\Bc x_\Bc  + \dico_{Z_i} x_{Z_i}),
\end{align}
and we have for $i$, the index of the largest missing coefficient, 
\ifthenelse{\boolean{onecol}}
{
\begin{align}\label{ipi}
\ip{\atom_i}{r_J} &= x_i  \| Q(\dico_J)\atom_i \|^2_2  +  \ip{\atom_i}{Q(\dico_J) \dico_\Bc x_\Bc} + \ip{\atom_i}{Q(\dico_J)\dico_{Z_i} x_{Z_i}}\notag\\
&= x_i  \| Q(\dico_J)\atom_i \|^2_2  +  \ip{\atom_i}{Q(\dico_J) \dico_\Bc x_\Bc} + \ip{\atom_i}{\dico_{Z_i} x_{Z_i}} - \ip{\atom_i}{P(\dico_J)\dico_{Z_i} x_{Z_i}}.
\end{align}
}
{
\begin{align}\label{ipi}
\ip{\atom_i}{r_J} = x_i  \| Q(\dico_J)\atom_i \|^2_2  & +  \ip{\atom_i}{Q(\dico_J) \dico_\Bc x_\Bc}\notag \\
&\qquad+ \ip{\atom_i}{  Q(\dico_J)\dico_{Z_i} x_{Z_i}}.
\end{align}
}
%
%
For $k \in Z_i$ we define $Z_i^k := Z_i/\{k\}$ and can rewrite \eqref{eq_Jres} as
\begin{align*}
r_J = Q(\dico_J) y = Q(\dico_J) ( x_i \atom_i + \dico_\Bc x_\Bc  + x_k \atom_k +\dico_{Z^k_i} x_{Z^k_i}),
\end{align*}
which leads to
\ifthenelse{\boolean{onecol}}
{
\begin{align} \label{ipk}
\ip{\atom_k}{r_J} &=  x_i \ip{\atom_k}{Q(\dico_J)\atom_i} + \ip{\atom_k}{ Q(\dico_J) \dico_\Bc x_\Bc} \notag\\
&\hspace{2cm} + x_k  + \ip{\atom_k}{\dico_{Z_i^k} x_{Z_i^k}} - \ip{\atom_k}{ P(\dico_J)\dico_{Z_i} x_{Z_i}} .
\end{align}
}
{
\begin{align} \label{ipk}
\ip{\atom_k}{r_J} &= x_k  \| Q(\dico_J)\atom_k \|^2_2 + \ip{\atom_k}{ Q(\dico_J) \dico_\Bc x_\Bc} \notag\\
& \quad - \ip{\atom_k}{P(\dico_B)Q(\dico_{A_i})\dico_{Z_i^k} x_{Z_i^k}}\notag \\
 & \qquad+ \ip{\atom_k}{ Q(\dico_{A_i})\dico_{Z_i^k} x_{Z_i^k}} + x_i \ip{\atom_k}{Q(\dico_J)\atom_i}. \notag 
\end{align}
}
We first bound the terms in (\ref{ipi}/\ref{ipk}) involving $Q(\dico_J)$. So for all $k\in Z_i \cup \{i\}$ we have 
\begin{align}
\| Q(\dico_J)\atom_k \|^2_2 &= 1-\| P(\dico_J)\atom_k \|^2_2 \notag \\
&=1- \ip{\dico_J^\star \atom_k}{(\dico_J^\star \dico_J)^{-1}\dico_J^\star\atom_k}\notag \\
& \geq 1- \|\dico_J^\star \atom_k\|_2 \cdot \| (\dico_J^\star \dico_J)^{-1}\|_{2,2} \cdot  \|\dico_J^\star \atom_k\|_2
\geq 1- \frac{|J| \mu^2}{1-\delta_J},
\end{align}
where we have used the bound $\| (\dico_J^\star \dico_J)^{-1}\|_{2,2} \leq (1-\delta_J)^{-1}$ from Lemma~6.2 in \cite{grrascva08}. Similarly for all $k\in Z_i$ we have
\begin{align}
\absip{\atom_k}{Q(\dico_J)\atom_i}  &= | \ip{\atom_k}{\atom_i} - \ip{\dico_J^\star \atom_k}{(\dico_J^\star \dico_J)^{-1}\dico_J^\star\atom_i} |\notag \\
& \leq  \mu + \|\dico_J^\star \atom_k\|_2 \cdot \| (\dico_J^\star \dico_J)^{-1}\|_{2,2} \cdot  \|\dico_J^\star \atom_i\|_2 \leq  \mu + \frac{|J| \mu^2}{1-\delta_J}. \label{ip_QJ}
\end{align}
and again for all $k\in Z_i \cup \{i\}$,
\ifthenelse{\boolean{onecol}}
{
\begin{align}
\absip{\atom_k}{ Q(\dico_J) \dico_\Bc x_\Bc} & \leq \absip{\atom_k}{\dico_\Bc x_\Bc} +  \absip{\atom_k}{ P(\dico_J) \dico_\Bc x_\Bc} \notag \\
&\leq \|\dico_\Bc^\star \atom_k\|_\infty \cdot  \| x_\Bc\|_1 + \| \dico_\Bc^\star P(\dico_J) \atom_k\|_\infty \cdot  \| x_\Bc\|_1 \notag \\
& \leq \|x_\Bc\|_1 \cdot \left( \mu + \max_{j \in \Bc} \absip{\atom_j}{ P(\dico_J) \atom_k}\right)\notag\\
& \leq \|x_\Bc\|_1 \left( \mu + \frac{|J| \mu^2 }{1-\delta_J}\right).
\end{align}
}
{
\begin{align}
&\absip{\atom_k}{ Q(\dico_J) \dico_\Bc x_\Bc} \notag \\
&\qquad \leq \absip{\atom_k}{\dico_\Bc x_\Bc} +  \absip{\atom_k}{ P(\dico_J) \dico_\Bc x_\Bc} \notag \\
&\qquad \leq \|x_\Bc\|_1 \left( \mu + \| \dico_\Bc^\star P(\dico_J) \atom_k\|_\infty\right)\notag\\
&\qquad  \leq \|x_\Bc\|_1 \left( \mu + \frac{|J| \mu^2 }{1-\delta_J}\right).
\end{align}
}
The terms involving $P(\dico_J)$ can be bounded for all $k\in Z_i \cup \{i\}$ as
 \begin{align*}
 \absip{\atom_k}{ P(\dico_J)\dico_{Z_i} x_{Z_i}} &= \ip{\dico_J^\star \atom_k }{ (\dico_J^\star \dico_J)^{-1}\dico_J^\star\dico_{Z_i} x_{Z_i}}\\
 &\leq \| \dico_J^\star \atom_k\|_2 \cdot \|(\dico_J^\star \dico_J)^{-1}\|_{2,2} \cdot \|\dico_J^\star\dico_{Z_i} x_{Z_i}\|_2 \\
 &\leq \frac{\sqrt{|J|} \mu}{1-\delta_J}\cdot \|\dico_J^\star\dico_{Z_i} x_{Z_i}\|_2\\
 &\leq \frac{|J| \mu}{1-\delta_J}\cdot \max_{j\in J}  \absip{\atom_j}{\dico_{Z_i} x_{Z_i}}\leq \frac{|J| \mu}{1-\delta_J}\cdot \max_{j\notin Z_i}  \absip{\atom_j}{\dico_{Z_i} x_{Z_i}}.
 \end{align*}\\
 %
From the bound above we can see that the only terms we still need to control are $\absip{\atom_j}{\dico_{Z_i} x_{Z_i}}$ for $j \notin Z_i$ and $\absip{\atom_k}{\dico_{Z^k_i} x_{Z^k_i}}$ for $k\in Z_i$. Note that for $j \notin Z_i$ we have $Z^j_i = Z_i$, meaning via Hoeffding's inequality we can estimate compactly for all $k$, 
\ifthenelse{\boolean{onecol}}
{
\begin{align}
\P \left(\absip{\atom_k}{\dico_{Z^k_i} x_{Z^k_i}} \geq \theta_k \right)&= \P\Big( \Big|{\sum}_{j \in Z_i^k} \ip{ \atom_k}{\atom_j }c_j \sigma_j \Big|>\theta_k\Big)\notag \\
&\leq 2\exp\left(\frac{-\theta_k^2}{ 2\sum_{j \in Z_i^k} \absip{\atom_k}{\atom_j }^2 c^2_j }\right)
\leq 2\exp\left(\frac{-\theta_k^2}{ 2\mu^2 \|c_{Z^k_i}\|^2_2}\right). \notag
\end{align}
}
{
\begin{align}
&\P \left(\absip{\atom_k}{ Q(\dico_{A_i})\dico_{Z_i^k} x_{Z_i^k}} \geq \theta_k \right)\notag\\
&\hspace{2cm}= \P\Big( \Big|{\sum}_{j \in Z_i^k} \ip{ \atom_k}{Q(\dico_{A_i})\atom_j }c_j \sigma_j \Big|>\theta_k\Big)\notag \\
&\hspace{2cm}\leq 2\exp\left(\frac{-\theta_k^2}{ 2\sum_{j \in Z_i} \absip{\atom_k}{Q(\dico_{A_i}) \atom_j }^2 c^2_j }\right). \notag
\end{align}
}
Setting $\theta_k = 2 \theta \mu \|c_{Z_i}\|_2$ and using a union bound, we get
\begin{align}
\P \left(\exists k : \absip{\atom_k}{\dico_{Z^k_i} x_{Z^k_i}} \geq  2 \theta \mu \|c_{Z_i}\|_2 \right) \leq 2Ke^{-2\theta^2}.
\end{align}
Combining all our estimates we get that
except with probability $2Ke^{-2\theta^2}$ for all $k\in Z_i$
\ifthenelse{\boolean{onecol}}
{
\begin{align*}
\absip{\atom_i}{r_J} - \absip{\atom_k}{r_J}  \geq c_i \left(1- \mu - \frac{2 |J| \mu^2}{1-\delta_J}\right) - c_k - 2\|c_\Bc\|_1 &\left( \mu + \frac{|J| \mu^2 }{1-\delta_J}\right)\notag \\
&-4 \theta \|c_{Z_i}\|_2 \left(\mu + \frac{|J| \mu^2}{1-\delta_{J} }\right).
\end{align*}
}
{
\begin{align}
&\absip{\atom_i}{r_J} - \absip{\atom_k}{r_J} \notag\\
&\hspace{5mm} \geq c_i \left(1- \mu - \frac{2 |J| \mu^2}{1-\delta_J}\right) - c_k - 2\|c_\Bc\|_1 \left( \mu + \frac{|J| \mu^2 }{1-\delta_J}\right)\notag \\
&\hspace{15mm}-2\|c_{Z_i} \|_2 \frac{\mu \sqrt{|B|}}{\sqrt{1-\delta_B}} \sqrt{1+\delta_{I}} \notag\\
&\hspace{20mm}-2 (\theta_k + \theta_i) \|c_{Z_i}\|_2 \left(\mu + \frac{|A_i| \mu^2}{1-\delta_{A_i} }\right).
\end{align}
}
We now determine the sets $M_i$ and $Z_i$ by choosing $T: = t \cdot\ceil{\frac{S}{\lambda}}$ to get the following bounds. For all $k\in Z_i$ we have 
\begin{align}
c_k \leq c_i \left(1- \frac{\lambda}{S}\right)^{T/t} \leq c_i \left(1- \frac{\lambda}{S}\right)^{S/\lambda} \leq c_i e^{-1}. 
\end{align}
We also have $ \|c_\Bc\|_1 \leq \|c_{M_i}\|_1\leq c_i t\ceil{\frac{S}{\lambda}}$. To bound $\|c_{Z_i}\|_2$ we note that at worst $c_{Z_i}$ consists of $t$ interleaved geometric sequences with initial value $c_i e^{-1}$ and decay factor $\alpha = 1-\lambda/S$, so we get
\begin{align*}
 \|c_{Z_i}\|_2 \leq  c_i e^{-1} \left(\frac{t}{1 - (1-\lambda/S)^2}\right)^{1/2}\leq c_i e^{-1}  \left(\frac{tS}{\lambda}\right)^{1/2}.
\end{align*}
Using $\delta_J\leq\delta_I \leq \frac12$ and setting $\theta:= \sqrt{m \log K}$, we get that except with probability $2K^{1-2m}$
for all $k \in Z_i$,
\ifthenelse{\boolean{onecol}}
{
\begin{align}
c_i^{-1}\left(\absip{\atom_i}{r_J} - \absip{\atom_k}{r_J} \right) &\geq 1- \mu - 4|J|\mu^2 - e^{-1} - 2t\ceil{\frac{S}{\lambda}} \left( \mu + 2|J| \mu^2\right) \notag\\
& \hspace{3.5cm}  - 4e^{-1} \sqrt{\frac{mtS\log K}{\lambda}}\left( \mu + 2 |J| \mu^2\right) \label{noiseless_cond_2}\\
& > 0.63 - 2\left( t\ceil{\frac{S}{\lambda}} + \sqrt{\frac{mtS\log K}{\lambda}} + 1 \right)\left( \mu + 2S \mu^2\right).\notag
\end{align}
}
{
\begin{align}
&c_i^{-1}\left(\absip{\atom_i}{r_J} - \absip{\atom_k}{r_J} \right)\notag \\
&\geq 1- \mu - 4|J|\mu^2 - e^{-1} - 2t\ceil{\frac{S}{\lambda}} \left( \mu + 2|J| \mu^2\right) \notag\\
& \quad - \sqrt{6}e^{-1}  \mu t\ceil{\frac{S}{\lambda}} - 4e^{-1} \sqrt{\frac{mtS\log K}{\lambda}}\left( \mu + 2 |A_i| \mu^2\right) \notag\\
& > 0.63 - 2\left( t\ceil{\frac{S}{\lambda}} + \sqrt{\frac{mtS\log K}{\lambda}} + 1 \right)\left( \mu + 2S \mu^2\right).\label{noiseless_cond_2}
\end{align}
}
In case $2 S\mu \leq 1$ the deterministic analysis holds and the theorem is trivially true. If conversely $2 S\mu \geq 1$, then $\mu + 2S\mu^2 \leq 4S\mu^2$ and so \eqref{noiseless_cond_final} implies that the last expression above is larger than zero, which further implies that
OMP will pick another admissible sub-support. Taking a union bound over all possible sets $A_i$ we get that OMP will succeed with probability at least $1 -  2SK^{1-2m}$ as long as \eqref{noiseless_cond_final} holds.
\end{proof}
%
First note that the theorem only improves over the worst case analysis when $\lambda > t$. However, if conversely $\lambda \leq t$, then the ratio between largest and smallest coefficient is of the order $\approx e^{-1}$, so thresholding should still have a good success probability.\\
To get a better feeling for the quality of the theorem, we next specialise it to the case $t=1$, where the coefficients form a sub-geometric sequence with parameter $\alpha$, meaning $c_{i+1}/c_i \leq \alpha < 1$. In this case the theorem essentially says that OMP will recover the support except with probability $2SK^{1-2m}$ as long as $S \mu^2 \lesssim 1-\alpha$ and $S\mu^2\sqrt{m \log K} \lesssim \sqrt{1-\alpha}$. Comparing this to the condition for BP, $S\mu^2 m \log K  \lesssim 1$, for failure probability $2K^{1-2m}$, we see that OMP has the advantage that the admissible sparsity level has a milder dependence on the dictionary size and success probability while BP has the advantage of being independent of the coefficient decay. This means that each algorithm can outperform the other depending on the setting. Before confirming this in the numerical simulation in Section~\ref{sec_sim} we first have a look at the performance of OMP in a noisy setting.

\section{Noisy Case} \label{sec_noisy}
We next study partial support recovery, when the sparse signals are contaminated with noise and are modelled as
\begin{align}\label{noisy_model}
\tilde y =  y + \noise = {\sum}_k \sigma_k c_k \atom_{p(k)} + \noise,
\end{align}
with $y$ as in the previous section and $\noise$ a sub-Gaussian noise vector with parameter $\nsig$. This means that $\E(\noise) = 0$ and that for all unit vectors $v$ and $\theta>0$ the marginals $\ip{v}{\noise}$ satisfy $\E (e^{\theta \ip{v}{\noise}}) \leq e^{\theta^2 \nsig^2/2}$.\\
For Gaussian noise the parameter $\nsig$ corresponds to the standard deviation and so for normalised coefficient sequences $\|c\|_2=1$ the signal to noise ratio (SNR) is $\frac{1}{d\nsig^2}$. Similar bounds also hold in the general case, \cite{hskazh11}. In the noisy setting we clearly cannot recover coefficients below the noise level so with more decay there will be a trade off between allowing to recover more atoms and decreasing the coefficients faster. 
\begin{theorem} Assume that the signals follow the model in \eqref{noisy_model} and that
for $i \leq S $ the coefficients satisfy $c_{i+t}/c_i \leq 1-\frac{\lambda}{S}$, $t,\lambda > 0$. 
Then OMP will recover an atom from the support in the first $\smallS$ steps, except with probability $4\smallS K^{1-2m}$, as long as,
\begin{align}
&\left( t\ceil{\frac{S}{\lambda}} + \sqrt{\frac{mtS\log K}{\lambda}} + 1 \right)\left( \mu + 2 s \mu^2 \right) \leq \frac{1}{10} \label{noisy_conda}\\
&\mbox {and} \qquad c_s \geq 14 \nsig \sqrt{m\log K}. \label{noisy_condb}
\end{align}
\end{theorem}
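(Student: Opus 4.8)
The plan is to mimic the structure of the noiseless proof, writing the noisy residual as the clean residual plus a noise contribution and treating the extra noise term separately. I would again restrict attention to admissible sub-supports $J = A_i \cup B$ with $B \subseteq M_i$ and $T = t\ceil{S/\lambda}$, where now $i \leq s$ indexes the largest remaining missing coefficient among the first $s$ atoms. Using $\tilde r_J = Q(\dico_J)\tilde y = r_J + Q(\dico_J)\noise$, the success condition for step $i$ becomes $\absip{\atom_i}{\tilde r_J} > \max_{k\in Z_i}\absip{\atom_k}{\tilde r_J}$, and by linearity of the inner product the noiseless gap bound from \eqref{noiseless_cond_2} carries over, reduced only by the two noise inner products $\absip{\atom_i}{Q(\dico_J)\noise}$ and $\absip{\atom_k}{Q(\dico_J)\noise}$.

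First I would reuse every deterministic and probabilistic estimate from the first theorem verbatim to bound the clean part $\absip{\atom_i}{r_J} - \absip{\atom_k}{r_J}$ from below by $c_i$ times the expression on the right of \eqref{noiseless_cond_2}; since $c_i \geq c_s$ for $i \leq s$, condition \eqref{noisy_conda} (which is \eqref{noiseless_cond_final} with $S$ replaced by $s$ in the final $\mu+2s\mu^2$ factor) guarantees this clean gap is at least roughly $c_s \cdot 0.63 - (\text{small})$, comfortably a positive constant multiple of $c_s$, say at least $\tfrac{1}{2}c_s$. Then I would bound the noise contribution: for any fixed unit-norm $Q(\dico_J)\atom_k$, the sub-Gaussian marginal hypothesis gives $\P(\absip{\atom_k}{Q(\dico_J)\noise} \geq \theta') \leq 2\exp(-\theta'^2/(2\nsig^2 \|Q(\dico_J)\atom_k\|_2^2))$, and since $\|Q(\dico_J)\atom_k\|_2 \leq 1$ this is at most $2\exp(-\theta'^2/(2\nsig^2))$. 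Setting $\theta' = \nsig\sqrt{2m\log K}\cdot(\text{const})$ and taking a union bound over the at most $K$ atoms $k \in Z_i\cup\{i\}$ controls both noise terms by a bound of the form $C\nsig\sqrt{m\log K}$, failing with probability at most $2K^{1-2m}$ per step.

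The final arithmetic is to collect everything: the gap $\absip{\atom_i}{\tilde r_J} - \absip{\atom_k}{\tilde r_J}$ is at least $\tfrac{1}{2}c_s - 2C\nsig\sqrt{m\log K}$, so I need $c_s$ to dominate the noise, which is exactly the role of \eqref{noisy_condb}: with $c_s \geq 14\nsig\sqrt{m\log K}$ the constant $14$ is chosen so that the noise terms eat at most half of the clean gap and the whole expression stays positive. The factor $4sK^{1-2m}$ in the failure probability, as opposed to $2sK^{1-2m}$, reflects the doubling from the two independent union bounds per step — one ($2Ke^{-2\theta^2}$) inherited from the Hoeffding estimate on the coefficient signs and one new one from the sub-Gaussian noise — summed over the at most $s$ values of $i$ (equivalently over the admissible $A_i$). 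The main obstacle, as in the noiseless case, is ensuring the union bound over sub-supports stays manageable: I must verify that the same admissibility reduction applies, namely that picking any index in $\Bc$ or the index $i$ still counts as a successful (admissible-preserving) step in the partial-recovery sense, so that correctly recovering a single support atom within $s$ steps only requires controlling the admissible supports indexed by $i \leq s$ rather than all $2^s$ subsets.
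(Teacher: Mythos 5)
Your overall architecture --- splitting $\tilde r_J = r_J + Q(\dico_J)\noise$, reusing the noiseless gap bound \eqref{noiseless_cond_2}, letting \eqref{noisy_condb} force the noise below the clean gap, and reading $4\smallS K^{1-2m}$ as two union bounds over $i\leq \smallS$ --- matches the paper. However, there is a genuine gap in how you control the noise terms. You apply the sub-Gaussian tail bound to $\absip{\atom_k}{Q(\dico_J)\noise}$ for ``fixed'' $J$ and then union-bound only over the at most $K$ atoms per step. But the sub-support $J$ that OMP holds at a given step is random and correlated with $\noise$ (the trajectory depends on the noise), so the bound must hold \emph{uniformly over all admissible} $J = A_i\cup B$, $B\subseteq M_i$, before it can be applied to the realized one. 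For each $i$ there are up to $2^{|M_i|}\approx 2^{T}$ admissible sets $B$ with $T = t\ceil{\frac{S}{\lambda}}$, so the honest union bound for your argument costs a factor exponential in $T$, which would force $\theta_\noise \gtrsim \nsig\sqrt{T+m\log K}$ and hence a condition like $c_\smallS \gtrsim \nsig\sqrt{t\ceil{\frac{S}{\lambda}}}$ in place of \eqref{noisy_condb}. This is exactly the combinatorial trap that the noiseless proof evades for the sign randomness (there, the only random quantities needing concentration, $\absip{\atom_k}{\dico_{Z_i^k}x_{Z_i^k}}$, depend on the pair $(i,k)$ alone, not on $B$), and your proposal does not evade it for the noise.

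The paper's resolution is a deterministic reduction of the $J$-dependent noise term to quantities indexed by $A_i$ alone, of which there are only $\smallS$. Since $A_i\subseteq J$ gives $Q(\dico_J)P(\dico_{A_i})=0$, one writes (see \eqref{noisesplit})
\[
\ip{\atom_k}{Q(\dico_J)\noise} = \ip{\atom_k}{Q(\dico_{A_i})\noise} - \ip{(\dico_J^\star \dico_J)^{-1}\dico_J^\star \atom_k}{\dico_J^\star Q(\dico_{A_i})\noise},
\]
and, because $\dico_J^\star Q(\dico_{A_i})\noise$ vanishes on the $A_i$ coordinates, the second term is bounded deterministically by $\frac{\mu\sqrt{|J|\cdot|B|}}{1-\delta_J}\max_{j\in B}\absip{\atom_j}{Q(\dico_{A_i})\noise}$ as in \eqref{noisebound}. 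Consequently only the $\smallS K$ marginals $\absip{Q(\dico_{A_i})\atom_k}{\noise}$ need probabilistic control, which is what produces the noise failure probability $2\smallS K^{1-2m}$ with $\theta_\noise = 2\nsig\sqrt{m\log K}$ and keeps the coefficient condition at $c_\smallS\geq 14\nsig\sqrt{m\log K}$. (This decomposition is delicate: the acknowledgments note that conflating $P(\dico_J)Q(\dico_{A_i})$ with $P(\dico_B)Q(\dico_{A_i})$ was a bug in an earlier version of the paper.) Your remaining arithmetic --- the clean gap of roughly $0.43\,c_i$ under \eqref{noisy_conda} and the role of the constant $14$ --- is consistent with the paper, but the proof is incomplete without this uniformization step.
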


\begin{proof}
We use the same approach as before, assuming w.l.o.g. $I = \Sset$, but take into account the new expression for the residuals 
\begin{align} 
\tilde r_J = Q(\dico_J) \tilde y = Q(\dico_J)(y + \noise)= r_J + Q(\dico_J) \noise
\end{align}
and inner products 
$\ip{\atom_k}{\tilde r_J} = \ip{\atom_k}{r_J} + \ip{\atom_k}{Q(\dico_J) \noise}$.
If $J$ is an admissible sub-support, $J = A_i \cup B$ for $B\subseteq M_i$,
then in analogy to \eqref{suff_cond_omp} a sufficient condition for OMP to pick another admissible support in the noisy case is that
\begin{align} \label{suff_cond_omp_noisy}
\absip{\atom_i}{\tilde{r}_J} > \max_{k \in Z_i} \absip{\atom_k}{\tilde{r}_J}.
\end{align}
Since we have $\absip{\atom_i}{\tilde r_J} \geq \absip{\atom_i}{r_J} - \absip{\atom_i}{Q(\dico_J) \noise}$ as well as 
$\absip{\atom_k}{\tilde r_J} \leq \absip{\atom_k}{r_J} + \absip{\atom_k}{Q(\dico_J) \noise}$ for all $k\in Z_i$
the condition in \eqref{suff_cond_omp_noisy} is implied by having for all $k\in Z_i$
\begin{align}
\absip{\atom_i}{r_J} - \absip{\atom_k}{r_J} > \absip{\atom_i}{Q(\dico_J) \noise} + \absip{\atom_k}{Q(\dico_J) \noise}.
\end{align}
This means that we need to bound $\absip{\atom_k}{Q(\dico_J) \noise}$ for all $k\notin J$. Using the decomposition $J = A_i \cup B$ we can rewrite
\begin{align}
\ifthenelse{\boolean{onecol}}
{
\ip{\atom_k}{Q(\dico_J) \noise}& = \ip{\atom_k}{Q(\dico_J)[P(\dico_{A_i}) + Q(\dico_{A_i})]\noise} \notag \\
&= \ip{\atom_k}{[\I_d - P(\dico_J)]Q(\dico_{A_i})\noise} \notag\\
&= \ip{\atom_k}{Q(\dico_{A_i})\noise} - \ip{\atom_k}{P(\dico_J)Q(\dico_{A_i})\noise} \notag \\
& = \ip{\atom_k}{Q(\dico_{A_i})\noise} - \ip{(\dico_J^\star \dico_J)^{-1}\dico_J^\star \atom_k}{\dico_J^\star Q(\dico_{A_i})\noise}.\label{noisesplit}
}
{
\ip{\atom_k}{Q(\dico_J) \noise}& = \ip{\atom_k}{Q(\dico_J)[P(\dico_{A_i}) + Q(\dico_{A_i})]\noise} \notag \\
& \hspace{-5mm}= \ip{\atom_k}{[\I_d - P(\dico_J)]Q(\dico_{A_i})\noise} \notag\\
& \hspace{-5mm}= \ip{\atom_k}{Q(\dico_{A_i})\noise} - \ip{\atom_k}{P(\dico_B)Q(\dico_{A_i})\noise}. \label{noisesplit}
}
\end{align}
Since $\ip{\atom_j}{Q(\dico_{A_i})\noise} = 0$ for all $j\in A_i$ this yields for all $k\notin J$ 
\begin{align}
\absip{\atom_k}{Q(\dico_J) \noise}&\leq \absip{\atom_k}{Q(\dico_{A_i})\noise} +\|(\dico_J^\star \dico_J)^{-1}\|_{2,2} \cdot \|\dico_J^\star \atom_k\|_2 \cdot \| \dico_B^\star Q(\dico_{A_i})\noise\|_2\notag\\
&\leq \absip{\atom_k}{Q(\dico_{A_i})\noise} +\frac{\mu \sqrt{|J|}}{1-\delta_J} \cdot \sqrt{|B|} \cdot \max_{j\in B} \absip{\atom_j}{Q(\dico_{A_i})\noise} \notag\\
&\leq \max_{k \notin A_i} \absip{Q(\dico_{A_i})\atom_k}{\noise}  \left(1 +\frac{\mu \sqrt{|J|\cdot |B| }}{1-\delta_J} \right). \label{noisebound}
\end{align}
To bound $\absip{Q(\dico_{A_i})\atom_k}{\noise}$ for all $k\notin A_i$ with high probability, we use the sub-Gaussian property of $\noise$ for the marginals $\ip{v_{ik}}{\noise}$, with $v_{ik} = Q(\dico_{A_i})\atom_k$. For $\smallS \leq S$ this leads to
\ifthenelse{\boolean{onecol}}
{
\begin{align}
 \P\big(\exists i \leq \smallS,k: \absip{Q(\dico_{A_i}) \atom_k}{\noise} \geq \theta_\noise \big)
 & \leq 2\smallS K\exp\left(-\frac{\theta_\noise^2}{2 \nsig^2}\right),\notag
\end{align}
}
{
\begin{align}
 \P\big(\exists i \leq \smallS,k: \absip{Q(\dico_{A_i}) \atom_k}{\noise} \geq \theta_\noise \big)
 & \leq 2\smallS Ke^{-\theta_\noise^2/(2 \nsig^2)},\notag
\end{align}
}
where we have used that $\|v_{ik}\|_2\leq 1$.
We substitute this bound into \eqref{noisebound} 
to get, that except with probability $2\smallS Ke^{-\theta_\noise^2/(2 \nsig^2)}$ for all $k\notin J$
\begin{align}
\absip{\atom_k}{Q(\dico_J) \noise}&\leq \theta_\noise  \left(1 +\frac{\mu \sqrt{|J|\cdot |B| }}{1-\delta_J} \right). 
\end{align}
Using the intermediate bound in the noiseless case from \eqref{noiseless_cond_2} as well as the estimate above with $|B|\leq  t\ceil{\frac{S}{\lambda}}$ and $\theta_\noise = 2\nsig\sqrt{m\log K}$ we get,
except with probability $4\smallS K^{1-2m}$, for all $J=A_i \cup B $ with $|J|< s$ and all $k\in Z_i$ that
\ifthenelse{\boolean{onecol}}
{
\begin{align}
c_i^{-1} \absip{\atom_i}{\tilde r_J} - \absip{\atom_k}{\tilde r_J} &>  0.63 - 2\left( t\ceil{\frac{S}{\lambda}} + \sqrt{\frac{mtS\log K}{\lambda}} + 1 \right)\cdot \left( \mu + 2\mu^2 s \right)\notag \\
&\hspace{4cm}  - \frac{4\nsig}{c_{\smallS}} \sqrt{m\log K} \left(1 + 2\mu \sqrt{s t\ceil{\frac{S}{\lambda}}}\right).\label{noisy_cond1}
\end{align}
}
{
\begin{align}
c_i^{-1}& \absip{\atom_i}{\tilde r_J} - \absip{\atom_k}{\tilde r_J}\notag \\
&>  0.6 - 4\left( t\ceil{\frac{S}{\lambda}} + \sqrt{\frac{mtS\log K}{\lambda}} + 1 \right)\cdot \left( \mu + \mu^2 |J| \right) \notag \\
&\hspace{2cm} - \frac{2\nsig}{c_{\smallS}} \sqrt{m\log K} \left(1 + 2\mu t\ceil{\frac{S}{\lambda}}\right).\label{noisy_cond1}
\end{align}
}
Finally, observe that the condition in \eqref{noisy_conda} guarantees that $4\mu^2s t\ceil{\frac{S}{\lambda}} \leq \frac{1}{5}$. Thus, in order for the right hand side above to be larger than zero, which implies recovery of a correct atom in the first $s$ steps, it suffices that 
\begin{align}
\frac{4\nsig}{c_{\smallS}} \sqrt{m\log K} \left(1 + \sqrt{\frac{1}{5}}\right) \leq 0.43
\end{align}
which is guaranteed by \eqref{noisy_condb}.
\end{proof}

\section{Numerical Simulations}\label{sec_sim}

\ifthenelse{\boolean{onecol}}
{
\begin{figure}[t]
\centering
\begin{tabular}{ccc}
\includegraphics[width=0.3\columnwidth]{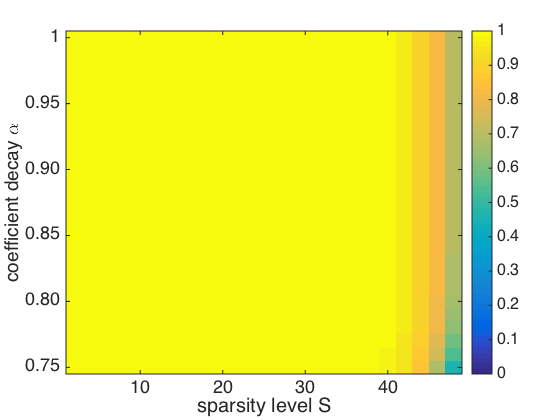}& \includegraphics[width=0.3\columnwidth]{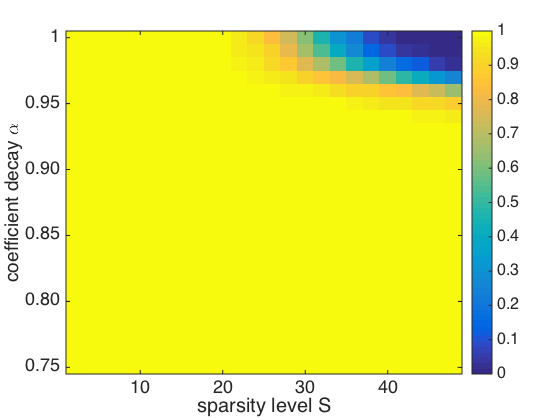} &\includegraphics[width=0.3\columnwidth]{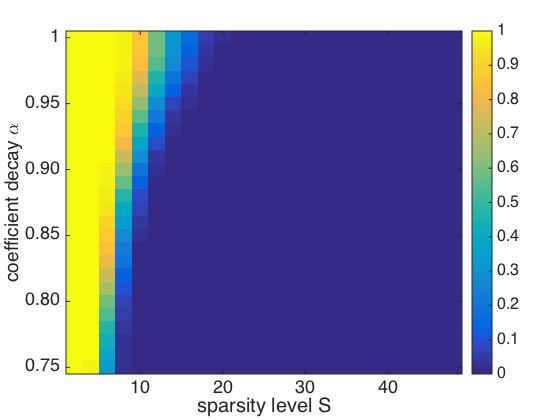}   \\
(a) & (b) & (c) \\
\includegraphics[width=0.3\columnwidth]{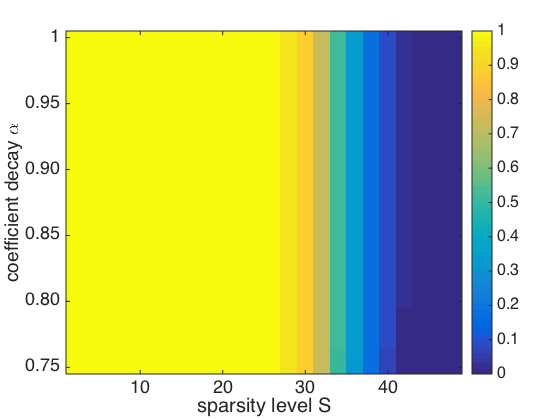}& \includegraphics[width=0.3\columnwidth]{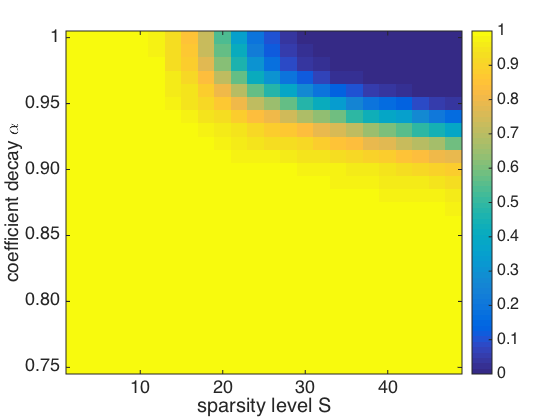}&\includegraphics[width=0.3\columnwidth]{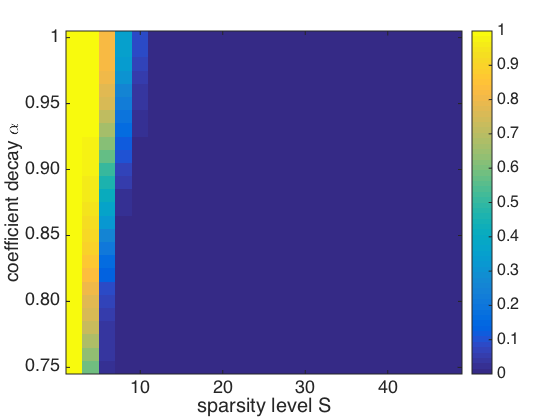}\\
(d) & (e) & (f) \\
\end{tabular}
\caption{Percentage of correctly recovered supports for noiseless signals with various sparsity and coefficient decay parameters via BP (a,d), OMP (b,e) and thresholding (c,f) in the Dirac-DCT dictionary (a,b,c) and the Dirac-DCT-random dictionary (d,e,f).\label{fig_noiseless}}
\end{figure}
}
{
\begin{figure}
\centering
\begin{tabular}{ccc}
\includegraphics[width=0.45\columnwidth]{figures/K256_bp.png}& \includegraphics[width=0.45\columnwidth]{figures/K256_omp.png}   \\
(a) & (b)\\
\includegraphics[width=0.45\columnwidth]{figures/K512_bp.png}& \includegraphics[width=0.45\columnwidth]{figures/K512_omp.png}\\
(c) & (d)\\
\end{tabular}
\caption{Percentage of correctly recovered supports for noiseless signals with various sparsity and coefficient decay parameters via BP (a,c) and OMP (b,d) in the Dirac-DCT dictionary (a,b) and the Dirac-DCT-random dictionary (c,d).\label{fig_noiseless}}
\end{figure}
}

\ifthenelse{\boolean{onecol}}
{
\begin{figure}[t]
\centering
\begin{tabular}{ccc}
\includegraphics[width=0.3\columnwidth]{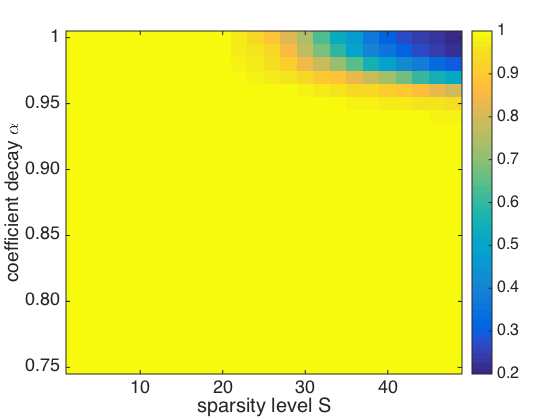} &\includegraphics[width=0.3\columnwidth]{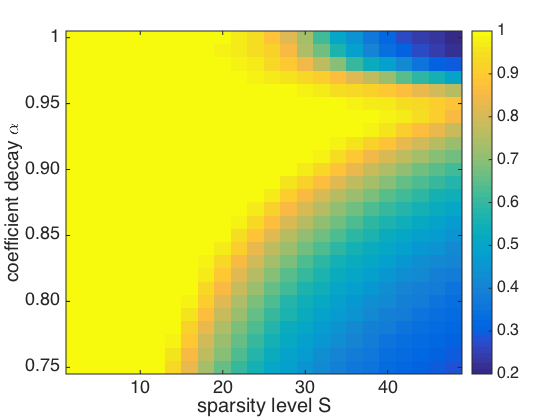}  & \includegraphics[width=0.3\columnwidth]{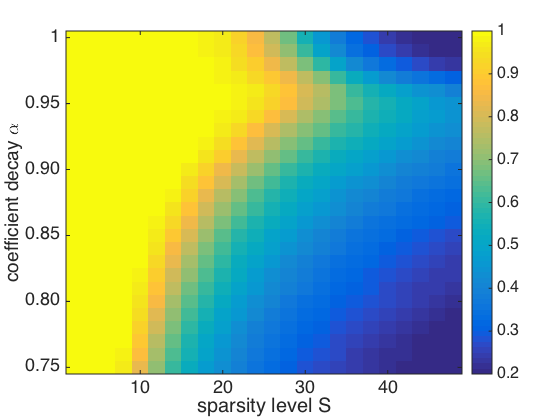}\\
(a) & (b) & (c)\\
\includegraphics[width=0.3\columnwidth]{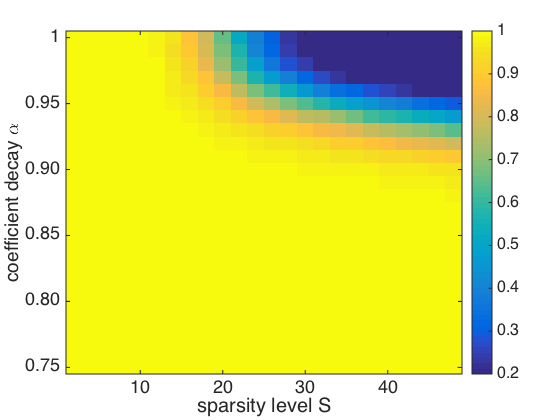} &\includegraphics[width=0.3\columnwidth]{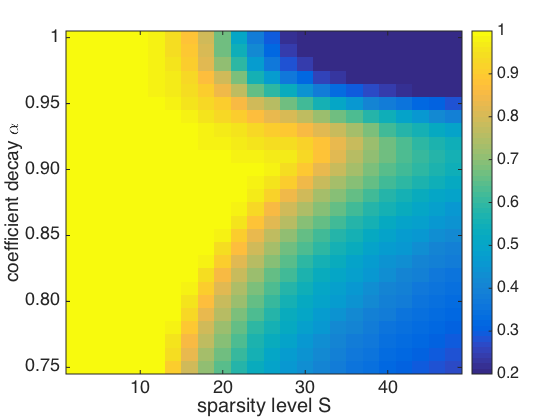}&\includegraphics[width=0.3\columnwidth]{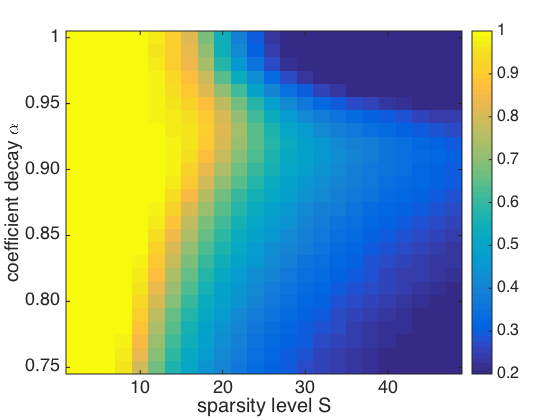}\\
(d) & (e) & (f)\\
&\includegraphics[width=0.3\columnwidth]{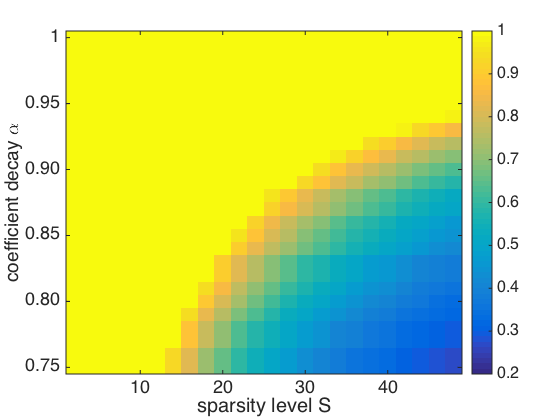}  &\includegraphics[width=0.3\columnwidth]{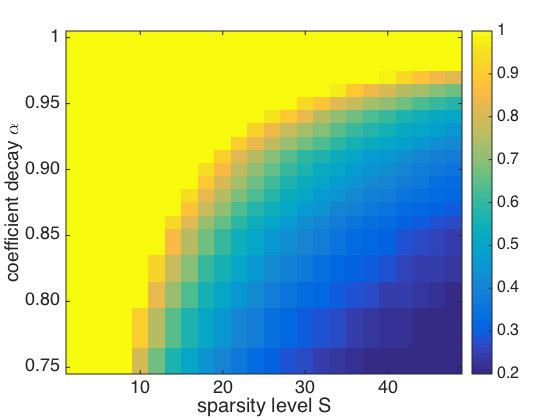}\\
&(g) & (h)
\end{tabular}
\caption{Percentage of correctly recovered atoms before recovery of first wrong atom via OMP for signals with various sparsity levels and coefficient decay parameter contaminated with no noise (a,d) or Gaussian noise corresponding to $\SNR=256$ (b,e) and $\SNR=16$ (c,f) in the
Dirac-DCT dictionary (a,b,c) and the Dirac-DCT-random dictionary (d,e,f), as well as the percentage of correctly recoverable atoms for $\SNR=256$ and $\SNR =16$ (g,h).\label{fig_noisy}}
\end{figure}
}
{
\begin{figure}[t!]
\centering
\begin{tabular}{cc}
\includegraphics[width=0.45\columnwidth]{figures/K256_SNR256.png}  & \includegraphics[width=0.45\columnwidth]{figures/K256_SNR16.png}\\
(a) & (b)\\
\includegraphics[width=0.45\columnwidth]{figures/K512_SNR256.png}&\includegraphics[width=0.45\columnwidth]{figures/K512_SNR16.png}\\
(c) & (d)\\
\includegraphics[width=0.45\columnwidth]{figures/SNR256.png}  &\includegraphics[width=0.45\columnwidth]{figures/SNR16.png}\\
(e) & (f)
\end{tabular}
\caption{Percentage of correctly recovered atoms before recovery of first wrong atom via OMP for signals with various sparsity levels and coefficient decay parameter contaminated with Gaussian noise corresponding to $\SNR=256$ (a,c) and $\SNR=16$ (b,d) in the
Dirac-DCT dictionary (a,b) and the Dirac-DCT-random dictionary (c,d), as well as the percentage of correctly recoverable atoms for $\SNR=256$ and $\SNR =16$ (e,f).\label{fig_noisy}}
\end{figure}
}

To see how well our results predict the performance of OMP, we conduct recovery experiments both with noisy and noiseless signals in $\R^d$ for $d=128$.
The signals follow the model in \eqref{noiseless_model} resp. \eqref{noisy_model}. The permutation $p$ is chosen uniformly at random and the sparse coefficients form a geometric series with parameter $\alpha$, meaning $c_i = \beta_S \alpha^i$ for $i\leq S$ and zero else, with $\beta_S$ a constant ensuring $\|c\|_2 =1$. We vary $\alpha$ between 0.75 and 1 and the sparsity level $S$ between $2$ and $48$. In case of noise, we choose $\noise_i$ i.i.d Gaussian with variance $\nsig^2 =\frac{1}{256d}$ and $\nsig^2 = \frac{1}{16d}$, corresponding to signal to noise ratios (SNR) of $256$ and $16$. As dictionaries we use the union of the Dirac and DCT bases (coherence $\mu = 0.125$) and the Dirac-DCT dictionary with additional $2d$ vectors choosing uniformly at random from the sphere ($\mu = 0.366$). \\
In the first experiment we draw $N=1000$ permutations $p$ and sign sequences $\sigma$ and for each pair $(S,\alpha)$ count how often BP, OMP and thresholding can recover the full support from the corresponding signals. From the results in Figure~\ref{fig_noiseless} we can see that the success region of OMP is indeed a union of two areas, one derived from the worst case analysis, $S\mu \lesssim 1$, and one from the average case analysis $S\mu^2 \log K \lesssim 1-\alpha $. In particular, we can see the linear dependence of the breakdown sparsity level on the parameter~$\alpha$. We can also see that the success of BP is not influenced by the coefficient decay and that, as indicated by theory, neither BP nor OMP is better in general but that each of them is better in a certain region. Finally, observe that the price you have to pay for the computational lightness of thresholding is the very limited range of parameters, where it is performing well.\\
In the second experiment we additionally draw $N$ noise-vectors to create the signals. For each signal we count how many atoms OMP identifies correctly before recovering the first incorrect atom. Figure~\ref{fig_noisy} shows the average over all $N$ realisations divided by the correct sparsity level for both dictionaries and three noise levels, as well as the relative number of recoverable atoms for the two non-zero noise levels (g, h), meaning the number of coefficients above the noise level $c_i^2 \geq 2 \nsig^2 \log K$. Comparing to the success rates in the noiseless case, we can clearly see the overlay of the two effects; for small coefficient decay we recover as many atoms as in the noiseless case, while for large decay we recover all atoms with coefficients above the noise level. 

\section{Discussion}

We have shown that OMP is successful with high probability if the coefficients exhibit decay and in such settings can even outperform BP. In particular, for geometric sequences with parameter $\alpha<1$ the admissible sparsity level scales as $S \mu^2 \log K \lesssim (1-\alpha)$. Our next goal is to extend the results to OMP using a perturbed dictionary, which is a necessary step to help tackle dictionary learning algorithms like K-SVD theoretically. We are also interested in deriving average case results for other algorithms such as stagewise OMP, \cite{dotsdrst12}, which picks more than one atom in each round, or Hard Thresholding Pursuit, \cite{fo11}, an iterative thresholding scheme. Both these algorithms can be computationally more efficient due to using less iterations and a theoretical analysis might allow the design of hybrids that automatically adapt to the decay, retain computational efficiency and allow for multiple stopping criteria.
%

\acks{This work was supported by the Austrian Science Fund (FWF) under Grant no.~Y760.
The computational results presented have been achieved (in part) using the HPC infrastructure LEO of the University of Innsbruck.
Many thanks go to Simon Ruetz for our discussions of projections and his resulting identification of the bug in the original proofs\footnote{For $J = A_i \cup B$ in general $P(\dico_J)Q(\dico_{A_i})\neq P(\dico_B)Q(\dico_{A_i})$.}, which lead to this correction. Finally, thanks to both Simon Ruetz and Marie-Christine Pali for their dedicated hunt after typos, missing commas and unclear sentences.}

\bibliography{/Users/karin/Desktop/latexnotes/karinbibtex.bib}
\bibliographystyle{plain}
\end{document}